\RequirePackage[OT1]{fontenc}
\documentclass[letterpaper, 10 pt, conference]{ieeeconf}
\IEEEoverridecommandlockouts
\overrideIEEEmargins
\usepackage{amssymb, amsfonts, mathtools, xargs, tensor, units, cite, stmaryrd, mathrsfs, algpseudocode, graphicx, xcolor, amsmath, listings, algorithm, tikz, pgfplots, cancel, multicol, geometry, multirow, booktabs,subcaption}

\usepackage{comment}
\usepackage{cancel}
\usepackage{pgfplots}
\usepackage{pgfplotstable}
\usepackage{cancel}
\usepackage{amsmath}
\usepackage{algorithm}
\usepackage{caption}
\pgfplotsset{compat=newest}

\pgfplotsset{compat=1.17}
\usepgfplotslibrary{colormaps}
\definecolor{customgreen}{rgb}{0.0, 0.6, 0.0}

\pgfplotsset{compat=1.18}
\usetikzlibrary{3d, calc}
\pgfplotsset{compat=newest}

\setlength{\arrayrulewidth}{0.5mm}
\setlength{\tabcolsep}{18pt}
\setcounter{MaxMatrixCols}{20}

\newtheorem{assumption}{Assumption}

\newtheorem{theorem}{Theorem}

\definecolor{color9}{RGB}{0,139,139}    

\allowdisplaybreaks
\geometry{bottom=0.8in,top=0.75in,right=0.75in,left=0.75in}
\title{Improved Dwell-times for Switched Nonlinear Systems using Memory Regression Extension}
\author{
Muzaffar Qureshi$^{1}$, Tochukwu Elijah Ogri$^{1}$, Humberto Ramos$^{1}$,\\
Wanjiku A. Makumi$^{2}$, Zachary I. Bell$^{2}$, Rushikesh Kamalapurkar$^{1}$
\thanks{This research was supported in part by the Air Force Research Laboratories under contract numbers FA8651-24-1-0019 and FA8651-23-1-0006 and the Office of Naval Research under contract number N00014-21-1-2481. Any opinions, findings, or recommendations in this article are those of the author(s), and do not necessarily reflect the views of the sponsoring agencies.}%
\thanks{$^{1}$ Department of Mechanical and Aerospace Engineering, University of Florida, email: {\tt\footnotesize \{muzaffar.qureshi, tochukwu.ogri, jramoszuniga, rkamalapurkar\} @ufl.edu}.}%
\thanks{$^{2}$ Air Force Research Laboratories, Florida, USA, email: {
\tt \footnotesize \{zachary.bell.10, wanjiku.makumi\} @us.af.mil.}}}

\begin{document}
\maketitle
\pagestyle{empty}

\begin{abstract}
This paper presents a switched systems approach for extending the dwell-time of an autonomous agent during GPS-denied operation by leveraging memory regressor extension (MRE) techniques.  To maintain accurate trajectory tracking despite unknown dynamics and environmental disturbances, the agent periodically acquires access to GPS, allowing it to correct accumulated state estimation errors. The motivation for this work arises from the limitations of existing switched system approaches, where increasing estimation errors during GPS-denied intervals and overly conservative dwell-time conditions restrict the operational efficiency of the agent. By leveraging MRE techniques during GPS-available intervals, the developed method refines the estimates of unknown system parameters, thereby enabling longer and more reliable operation in GPS-denied environments. A Lyapunov-based switched-system stability analysis establishes that improved parameter estimates obtained through concurrent learning allow extended operation in GPS-denied intervals without compromising closed-loop system stability. Simulation results validate the theoretical findings, demonstrating dwell-time extensions and enhanced trajectory tracking performance.
\end{abstract}

\section{Introduction}
Autonomous agents are frequently deployed in challenging environments where reliable state feedback from GPS is unavailable or intermittently disrupted. Such scenarios arise in applications including underwater navigation, subterranean exploration, military surveillance, and the mapping of hostile radar fields, where agents must either remain undetected or operate with limited sensing capabilities \cite{muzaffar.Wynn.Veerle.2014, muzaffar.Cathrin.Helen.ea2019, muzaffar.Qureshi.Ogri.ea2024}. In such scenarios, agents predominantly rely on inertial navigation systems (INS) and relative sensing techniques for state estimation. However, these methods are inherently prone to error accumulation over time, leading to drift and progressively degraded localization accuracy. \cite{muzaffar.Chowdhary.Johnson.ea2013, muzaffar.Leishman.McLain.ea2013}.

Visual or LiDAR-based simultaneous localization and mapping (SLAM) is a widely adopted alternative for inertial navigation. SLAM techniques leverage environmental feature recognition to mitigate localization drift through loop closures detecting and revisiting previously encountered locations~\cite{Muzaffar.Netter.Francheschini.2002, muzaffar.Stephan.2012, muzaffar.Whyte.Bailey2006}. While effective in structured environments, SLAM inherently depends on the presence of distinctive and persistent features. In feature-sparse settings, such as underwater domains, planetary surfaces, or environments affected by adverse weather conditions, SLAM performance significantly deteriorates, thereby compromising the localization accuracy of the agent ~\cite{muzaffar.Yin.Wang.ea2021}.

To address these limitations, recent research has developed a switched-systems framework in which agents periodically access GPS to correct accumulated localization errors before resuming operation in a GPS-denied setting (see \cite{muzaffar.Chen.Yuan.ea2019, muzaffar.Greene.Sakha.ea2024, muzaffar.Qureshi.Ogri.ea2025a}). The intermittent use of GPS ensures that localization errors remain bounded at all times. 

In~\cite{muzaffar.Chen.Yuan.ea2019}, the authors derived the dwell-time conditions using Lyapunov-based stability analysis, establishing the minimum and maximum allowable time durations for GPS-available and GPS-denied intervals to ensure that state estimation errors remain within prescribed safety bounds. However, these dwell-time conditions are inherently conservative and do not leverage state information acquired during GPS-enabled intervals, resulting in frequent use of GPS to maintain stability.

In this paper, the dwell-time conditions are relaxed using memory regressor extension (MRE) \cite{SCC.Bell.Chen.ea2017,muzaffar.Sethares.Anderson.ea1989,muzaffar.Slotine.Li1989,SCC.Bell.Nezvadovitz.ea2018} techniques while still ensuring that estimation errors remain within predefined safety bounds. By leveraging MREs, we make the error bounds less conservative over time, allowing for more efficient state estimation and trajectory tracking.

A switched-systems framework is developed to extend the permissible dwell-time in GPS-denied intervals using models learned via MREs when GPS is available. The developed approach systematically exploits measurements collected during intermitted periods of GPS availability to refine the system model. By improving model accuracy, MRE techniques reduce the growth of localization errors in subsequent GPS-denied intervals. Compared to prior methods, the developed framework relaxes conservative dwell-time conditions while preserving Lyapunov-based stability guarantees during operation in GPS-denied intervals.

The primary contribution of this paper lies in integrating parameter estimation strategies into a switched-system framework to derive improved dwell-time conditions as functions of modeling errors. This integration allows the user to increase the duration of GPS-denied navigation intervals as learned models become more accurate while maintaining predefined error bounds. The effectiveness of the developed switched-system approach, relative to existing methods, is established through a Lyapunov-based stability analysis and validated via numerical simulations.

\section{Problem Formulation}
Consider an autonomous agent tasked with tracking a desired trajectory \( x_d \) within an environment where continuous state feedback is unavailable. The control-affine system describes the dynamics of the agent
\begin{equation}\label{eq:dynamics}
\dot{x}(t) = f(t, x) + u(t) + Y(t, x)\theta + d(t, x),
\end{equation}
where \( x \in \mathbb{R}^n \) denotes the state of the agent, \( u : \mathbb{R}_{\geq 0} \rightarrow \mathbb{R}^n \) is the control input, \( f : \mathbb{R}_{\geq 0} \times \mathbb{R}^n \rightarrow \mathbb{R}^n \) represents known part of the drift dynamics, and \( Y : \mathbb{R}_{\geq 0} \times \mathbb{R}^n \rightarrow \mathbb{R}^{n \times p} \) is a known regressor matrix associated with the unknown part of th dynamics. The parameter vector \( \theta \in \mathbb{R}^p \) is unknown, and \( d : \mathbb{R}_{\geq 0} \times \mathbb{R}^n \rightarrow \mathbb{R}^n \) represents unmodeled residual dynamics. To ensure the existence and uniqueness of solutions to~\eqref{eq:dynamics}, the following assumptions are required
\begin{assumption}\label{assump:lipschitz}
The functions \( f \) and  \( Y \) are piecewise continuous with respect to $t$ and are locally Lipschitz continuous with respect to \( x \) uniformly in $t$. Specifically, given any compact set $\mathcal{X} \subset \mathbb{R}^n$, there exist constants \( L_f, L_y > 0 \), such that 
\begin{align}
\|f(t,x)-f(t,\hat{x})\| &\leq L_f \|x-\hat{x}\|,\\
\|Y(t,x)-Y(t,\hat{x})\| &\leq L_y\| x-\hat{x}\|,
\end{align}
for all \( x,\hat{x} \in \mathcal{X} \), and for all \( t \in \mathbb{R}_{\geq 0} \).
Finally, \( f(t,0) = 0 \), \( Y(t,0) = 0 \) for all \(\forall t \in \mathbb{R}_{\geq 0}\).
\end{assumption}
Under Assumption~\ref{assump:lipschitz}, the regressor matrix \( Y \) is bounded on $\mathcal{X}$ by a positive constant \( \bar{Y} > 0 \) such that
$\| Y(t, x) \| \leq \bar{Y}$, for all $x \in \mathcal{X}$ and for all $t \in \mathbb{R}_{\geq 0}$.

\begin{assumption}\label{assump:bounded_disturbance}
The unmodeled dynamics \( d \) are bounded such that, $ \forall (t,x) \in \mathbb{R}_{\geq0} \times \mathbb{R}^n, \| d(t,x) \| \leq \bar{d},$ where \( \bar{d} \in \mathbb{R}_{\geq0} \).
\end{assumption}
\begin{assumption}
The desired trajectory $x_d$ is bounded by a known positive constant $\overline{x}_d \in \mathbb{R}_{> 0}$, such that $\| x_d (t)\| \leq \overline{x}_d$, $\forall t \in \mathbb{R}_{\geq 0}$.
\end{assumption}

The control objective is to track a desired trajectory while minimizing the need to measure the system state. In GPS-denied intervals, the controller relies on model-based state estimates. Due to modeling errors included by the disturbance $d$ and the unknown parameters $\theta$, the state estimation error, and consequently, the tracking error grows in GPS-denied intervals. Since the growth rate of the error depends on the parameter estimation error, the time the agent can spend without access to GPS while maintaining prescribed tracking error bounds can be increased as the parameters are learned. In this paper, we developed a scheduling technique that enables the agent to extend its operational time without GPS access while maintaining stability and tracking performance. In particular, a memory regressor extension technique is integrated into a switched-system framework to enhance the performance of an autonomous agent operating without GPS. 

\section{Overview}
During GPS-available intervals, the agent leverages available state measurements to estimate the unknown model parameters \( \theta \), and during GPS-denied intervals, the agent uses the parameter estimates to estimate the system state.

The state estimation error \( e_1 : \mathbb{R}_{\geq 0} \rightarrow \mathbb{R}^n \) is defined as
\begin{equation}\label{eq:e1}
e_1(t) \coloneqq x(t) - \hat{x}(t),
\end{equation}
where \( \hat{x}(t) \in \mathbb{R}^n \) is the state estimate.
The trajectory tracking error \( e_2 : \mathbb{R}_{\geq 0} \rightarrow \mathbb{R}^n \) is defined as
\begin{equation}\label{eq:e2}
e_2(t) \coloneqq \hat{x}(t) -{x}_{d}(t).
\end{equation}
Let the concatenated error vector be defined as \( e(t) \coloneqq \begin{bmatrix} e_1(t)^\top & e_2(t)^\top \end{bmatrix}^\top \in \mathbb{R}^{2n} \). Note that \( e_1 \) is only measurable when the agent has access to GPS while \( e_2 \) is always measurable. Let \( V : \mathbb{R}^{2n} \to \mathbb{R} \) be a candidate Lyapunov function defined as
\begin{equation}\label{eq:lyapunov}
V(e) \coloneqq \frac{1}{2} e_1^\top e_1 + \frac{1}{2} e_2^\top e_2.
\end{equation}
The calculation of dwell-times depends on two user-defined bounds on $V$, an upper bound \( V_u \in \mathbb{R}_{>0} \text{ and a lower bound } V_l \in \mathbb{R}_{>0} \). These bounds are used to find an upper bound on the time the agent can spend without GPS and a lower bound on the time the agent must spend with GPS to ensure that \( V_l \leq V(e) \leq V_u \) for all \( t \). The idea is to use a Lyapunov-based argument to bound the rate of increase of $V$ when the agent is operating without state feedback and bound the rate of decrease of $V$  when the agent is operating with state feedback. 

To develop the switched systems framework, we assume that the GPS is available at the initial time \( t_0^a \). Let  \( t_0^u \) denote the first time instance when GPS becomes unavailable. Subsequent GPS availability and unavailability instances are indexed by \( \sigma \), and denoted by \( t_\sigma^a \) and \( t_\sigma^u \), respectively. During the time interval \( t \in \left[t_\sigma^a, t_\sigma^u\right) \), GPS is available, while for \( t \in \left[t_\sigma^u, t_{\sigma+1}^a\right) \), GPS is not available. Let \( \Delta t^a_\sigma \coloneqq t^u_\sigma - t^a_\sigma \) and \( \Delta t^u_\sigma \coloneqq t^a_{\sigma+1} - t^u_\sigma \) denote the durations of the GPS-available and the GPS-denied intervals, respectively. Let $\mathcal{N}^{o} \subseteq \mathbb{N}$ denote the number of switches between GPS availability and unavailability.

In this paper, the agent leverages state measurements obtained during \( [t_{\sigma}^a,t_{\sigma}^u) \) to refine estimates of the unknown parameter vector \( \theta \). The idea is to utilize these improved estimates to progressively extend the permissible dwell-time $\Delta t^u_\sigma$ over time while ensuring that state estimation errors remain within the same user-defined bounds. The following section presents the MRE formulation that enables robust parameter estimation using state data during \( [t_{\sigma}^a,t_{\sigma}^u)\).

\section{Memory Regressor Extension}  
To estimate the parameters, we utilize memory-based adaptive control. To facilitate the estimation of the unknown parameter vector \( \theta \), we express \eqref{eq:dynamics} as
\begin{equation}\label{eq:regressor_form}
\mathcal{U}_{f\sigma}(t) = \mathcal{Y}_{f\sigma}(t)\theta + \Xi_{\sigma}(t), \quad \sigma \in \mathcal{N}^{o},
\end{equation}
where \( \mathcal{U}_{f\sigma} : \mathbb{R}_{\geq 0} \rightarrow \mathbb{R}^n \), \( \mathcal{Y}_{f\sigma} : \mathbb{R}_{\geq 0} \rightarrow \mathbb{R}^{n \times p} \), and \( \Xi_{f\sigma} : \mathbb{R}_{\geq 0} \rightarrow \mathbb{R}^n \) represent the filtered state and control inputs, the filtered regressor matrix, and the filtered residual modeling error, respectively. The signals \( \mathcal{U}_{f\sigma}\) and \( \mathcal{Y}_{f\sigma} \) are computed using measurements of the system state \( x \) and the input \( u \). The residual term \( \Xi_{f\sigma}(t) \) satisfies \( \Xi_{f\sigma}(t) = \mathcal{O}(\bar{d}) \) for all \( t \geq 0 \), where \( \bar{d} \) is an upper bound on the unmodeled dynamics\footnote{The notation \( \Xi_{f\sigma}(t) = \mathcal{O}(\bar{d}) \) means that there exist a constant \( c \in \mathbb{R}_{>0} \) such that \( \|\Xi_{f\sigma}(t)\| \leq c \, \bar{d} \) for all \( t \in \mathbb{R}_{\geq 0} \).}.

Several approaches can be employed to compute the filtered signals in~\eqref{eq:regressor_form}. One common technique is using regressor filtering \cite{muzaffar.Nguyen2018}, which involves convolving the signals with an exponentially decaying filter \( h(t) = \beta e^{-\beta t} \), where \( \beta > 0 \), yielding
{\medmuskip=2mu\thinmuskip=2mu\thickmuskip=2mu\begin{equation}
    \mathcal{U}_{f\sigma}(t) = \begin{cases}
        [h * \left(\dot{x}(\cdot)-f(\cdot,x(\cdot))-u(\cdot)\right)]{(t)}, & t \in (t_{\sigma}^{a}, \; t_{\sigma}^{u}], \\
        0_{n \times 1}, & t \in (t_{\sigma}^{u}, t_{\sigma + 1}^{a}],
    \end{cases}
\end{equation}}
\begin{equation} \mathcal{Y}_{f\sigma}(t) =  \begin{cases}
        [h * Y(\cdot,x(\cdot))]{(t)}, & t \in (t_{\sigma}^{a}, \; t_{\sigma}^{u}], \\
        0_{n \times p}, & t \in (t_{\sigma}^{u}, t_{\sigma + 1}^{a}],
    \end{cases}
\end{equation}
and
\begin{equation}
    \Xi_{f\sigma}(t)  = \begin{cases}
         [h * d(\cdot,x(\cdot))]{(t)}, & t \in (t_{\sigma}^{a}, \; t_{\sigma}^{u}], \\
        0_{n \times 1}, & t \in (t_{\sigma}^{u}, t_{\sigma + 1}^{a}].
    \end{cases}
\end{equation}
Another MRE approach is windowed integration, where the signals are integrated over a finite time window of length \( \Delta t > 0 \). Let 
$\mathcal{I}_{u}(t) \coloneqq x(t) - x(t - \Delta t) - \int_{t-\Delta t}^{t} f(\tau, x(\tau)) - u(\tau) \, \mathrm{d}\tau$, $\mathcal{I}_{y}(t) \coloneqq \int_{t-\Delta t}^{t} Y(\tau, x(\tau)) \, \mathrm{d}\tau$, and $\mathcal{I}_{d}(t) \coloneqq \int_{t-\Delta t}^{t} d(\tau, x(\tau))\, \mathrm{d}\tau$ with $t > t_{\sigma}^{a} + \Delta t$. Then the signals  \(\mathcal{U}_{f\sigma}:\mathbb{R}_{\geq 0}\rightarrow \mathbb{R}^n\), \(\mathcal{Y}_{f\sigma}:\mathbb{R}_{\geq 0}\rightarrow\mathbb{R}^{n\times p}\), and \(\Xi_{f\sigma}:\mathbb{R}_{\geq 0}\rightarrow\mathbb{R}^n\) can be defined as
\begin{equation}\label{eq:filteredSignal1}
    \mathcal{U}_{f\sigma}(t) = \begin{cases}
        \mathcal{I}_{u}(t) - \mathcal{I}_{u}(t_{\sigma}^{a}), & t \in (t_{\sigma}^{a}, t_{\sigma}^{a} + \Delta t],\\
        \mathcal{I}_{u}(t) - \mathcal{I}_{u}(t - \Delta t), & t \in (t_{\sigma}^{a} + \Delta t, t_{\sigma}^{u}], \\
        0_{n \times 1}, & t \in (t_{\sigma}^{u}, t_{\sigma + 1}^{a}],
    \end{cases}
\end{equation}
\begin{equation}\label{eq:filteredSignal2}
    \mathcal{Y}_{f\sigma}(t) = \begin{cases}
        \mathcal{I}_{y}(t) - \mathcal{I}_{y}(t_{\sigma}^{a}), & t \in (t_{\sigma}^{a}, t_{\sigma}^{a} + \Delta t],\\
        \mathcal{I}_{y}(t) - \mathcal{I}_{y}(t - \Delta t), & t \in (t_{\sigma}^{a} + \Delta t, t_{\sigma}^{u}], \\
        0_{n \times p}, & t \in (t_{\sigma}^{u}, t_{\sigma + 1}^{a}],
    \end{cases}
\end{equation}
and 
\begin{equation}\label{eq:filteredSignal3}
    \Xi_{f\sigma}(t) = \begin{cases}
        \mathcal{I}_{d}(t) - \mathcal{I}_{d}(t_{\sigma}^{a}), & t \in (t_{\sigma}^{a}, t_{\sigma}^{a} + \Delta t],\\
        \mathcal{I}_{d}(t) - \mathcal{I}_{d}(t - \Delta t), & t \in (t_{\sigma}^{a} + \Delta t, t_{\sigma}^{u}], \\
        0_{n \times 1}, & t \in (t_{\sigma}^{u}, t_{\sigma + 1}^{a}],
    \end{cases}
\end{equation}
respectively.

Once $\mathcal{U}_{{f \sigma}}$ and $\mathcal{Y}_{{f \sigma}}$ are obtained, a MRE technique, such
as concurrent learning (CL) \cite{SCC.Parikh.Kamalapurkar.ea2019} is used to compute \( \mathcal{U}_{\sigma}, \mathcal{Y}_{\sigma}\) and $\Xi_{\sigma}$, such that
\begin{equation}
    \mathcal{U}_{\sigma}(t)= \mathcal{Y}_{\sigma}(t)\theta+\Xi_{\sigma}(t),
\end{equation} 
where \( \mathcal{U}_{\sigma}: \mathbb{R}_{\geq 0}  \mapsto  \mathbb{R}^{p} \) and \( \mathcal{Y}_{\sigma}: \mathbb{R}_{\geq 0} \mapsto \mathbb{R}^{p \times p} \) can be computed using measurements of \( x \) and \( u \) provided $\Xi_{f\sigma}(t)=\mathcal{O}(\bar{d})$. In concurrent learning, a set of discrete time instances \( \{t_i\}_{i=1}^N \subseteq (t_\sigma^a,t_\sigma^u)\), with \( N > p \), is selected to store the values of  $\mathcal{U}_{f \sigma}$ and $\mathcal{Y}_{f \sigma}$. Using these time-stamped samples, \( \mathcal{U}_{\sigma}, \mathcal{Y}_{\sigma} \) and $\Xi_{\sigma}$ can be expressed for $t \in (t_\sigma^a,t_\sigma^u)$ as
\begin{align}
\mathcal{U}_\sigma (t) & \coloneqq \sum_{i=1}^N \frac{\mathcal{Y}_{f \sigma}^{\top}(t_i(t)) \mathcal{U}_{f \sigma}(t_i(t))}{1 + \| \mathcal{Y}_{f \sigma}(t_i(t) \|^2}, \ \mathcal{U}_{\sigma}(t_{\sigma}^a) =0_{p\times 1} \label{eq:concurrent_F}, \\
\mathcal{Y}_\sigma (t) & \coloneqq \sum_{i=1}^N \frac{\mathcal{Y}_{f \sigma}^{\top}(t_i(t)) \mathcal{Y}_{f \sigma}(t_i(t))}{1 + \| \mathcal{Y}_{f \sigma}(t_i(t)) \|^2}, \ \mathcal{Y}_{\sigma}(t_{\sigma}^a) =0_{p \times p},\label{eq:concurrent_Y} \\
\Xi_\sigma (t) & \coloneqq \sum_{i=1}^N \frac{\mathcal{Y}_{f \sigma}^{\top}(t_i(t)) \Xi_{\sigma}(t_i(t))}{1 + \| \mathcal{Y}_{f \sigma}(t_i(t)) \|^2},\ \Xi_{\sigma}(t_{\sigma}^a) =0_{p \times 1}, \label{eq:concurrent_Xi}
\end{align}

Another approach to compute $\mathcal{U}_{\sigma}$ and $\mathcal{Y}_{\sigma}$ is using exponentially weighted integrals \cite{muzaffar.Garg.BasuRoy.ea2024}, where a forgetting factor \( \alpha > 0 \) is used to prioritize recent data while retaining informative contributions from earlier measurements. That is, for $t \in [t_{\sigma}^a,t_{\sigma}^u)$,
\begin{equation}
    \mathcal{U}_\sigma(t) = \int_{t_{\sigma}^{a}}^{t} e^{-\alpha \tau}\left( \frac{\mathcal{Y}_{f\sigma}^{\top}(\tau) \mathcal{U}_{f\sigma}(\tau)}{1+\|\mathcal{Y}_{f\sigma}(\tau)\|^2}\right) \mathrm{d}\tau,
\end{equation}
\begin{equation}
    \mathcal{Y}_\sigma(t) = \int_{t_{\sigma}^{a}}^{t} e^{-\alpha \tau} \left(\frac{\mathcal{Y}_{f \sigma}^{\top}(\tau) \mathcal{Y}_{f \sigma}(\tau)}{1+\|\mathcal{Y}_{f \sigma}(\tau)\|^2 }\right)\mathrm{d}\tau,
\end{equation}
and
\begin{equation}
    \Xi_\sigma(t) = \int_{t_{\sigma}^{a}}^{t} e^{-\alpha \tau} \left(\frac{\mathcal{Y}_{f \sigma}^{\top}(\tau) \Xi_{{f \sigma}}(\tau)}{1+\|\mathcal{Y}_{f \sigma}(\tau)\|^2}\right)\, \mathrm{d}\tau.
\end{equation}
 The developed framework can also incorporate other MRE techniques, such as those developed in \cite{muzaffar.Ortega.Nikiforov.ea2020,SCC.Kreisselmeier1997,SCC.Bell.Parikh.ea2016, SCC.Ogri.Bell.ea2023}.  Convergence of the parameter estimation error to a neighborhood of the origin follows if the regressor $\mathcal{Y}_{\sigma} $ is sufficiently exciting, as formalized in the following assumption. 
\begin{assumption}\label{assump:fe}
    The regressor $\mathcal{Y}_{\sigma}$ is uniformly sufficiently exciting i.e. for all $\sigma \in \mathcal{N}^{o}$, there exist constants $\lambda_{y} \in \mathbb{R}_{> 0}$ and $T_{\sigma} \in [t_{\sigma}^{a}, t_{\sigma}^{u})$ such that for $T_\sigma 
    \leq t \leq t_{\sigma+1}^a$, and for all initial conditions $e_1(t_{\sigma}^a),e_2(t_{\sigma}^a)$, and $\tilde{\theta}(t_{\sigma}^a)$,
\begin{equation}\label{eq:regressorBound}
    \lambda_{\min}(\mathcal{Y}_{\sigma})  > \lambda_{y}.
\end{equation}
\end{assumption}

Based on the subsequent stability analysis in Section~\ref{section:theta_stability} and under Assumption~\ref{assump:fe}, an adaptive update law is designed as
\begin{equation}\label{eq:thetaUpdateSwitching}
\dot{\hat{\theta}}(t) =
\begin{cases}
k_{\theta} \Gamma\left(\mathcal{U}_{\sigma}(t) - \mathcal{Y}_{\sigma}(t) \hat{\theta}(t) \right), & \quad t \geq T_{\sigma}, \\
0_{p\times 1},& \quad t < T_{\sigma},
\end{cases}
\end{equation}
where \( k_{\theta} > 0 \) is a constant scalar, \(\Gamma \in \mathbb{R}^{p\times p}\) is a constant positive definite matrix, and $\hat{\theta} \in \mathbb{R}^{p}$ denotes the estimate of $\theta$. Let the parameter estimation error be defined as 
\begin{equation}\label{eq:paramE}
    \tilde{\theta}(t) \coloneqq \theta - \hat{\theta}(t).
\end{equation}
Differentiating \eqref{eq:paramE} and substituting the update law from \eqref{eq:thetaUpdateSwitching} yields the parameter estimation error dynamics
\begin{equation}\label{eq:parameterErrorDynamics2}
\dot{\tilde{\theta}}(t) =  \begin{cases} -k_{\theta}\Gamma\mathcal{Y}_{\sigma}(t)\tilde{\theta}(t) -k_{\theta}\Gamma\Xi_{\sigma}(t),& \quad t \geq T_{\sigma}, \\
0_{p \times 1},& \quad t < T_{\sigma}.
\end{cases}
\end{equation}

The next section analyzes the stability of the parameter estimation error system in \eqref{eq:parameterErrorDynamics2} and establishes uniform ultimate boundedness of its trajectories.

\section{Stability Analysis for CL Update Laws}\label{section:theta_stability}
The stability properties of the MRE-based observer developed in \eqref{eq:thetaUpdateSwitching} are summarized in the
following theorem.
\begin{theorem}
If Assumptions~\ref{assump:lipschitz}--\ref{assump:fe} are satisfied, then the adaptive update law defined in \eqref{eq:thetaUpdateSwitching} ensures global uniform ultimate boundedness of the parameter estimation error \(\tilde{\theta}\) during the GPS-available interval \( [t_\sigma^a, t_\sigma^u) \) for each $\sigma \in \mathcal{N}^{o}$.
\label{theorem:theorem1}
\end{theorem}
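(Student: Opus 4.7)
The plan is to use a quadratic Lyapunov function tailored to the parameter estimation error dynamics in \eqref{eq:parameterErrorDynamics2} and to argue boundedness separately on the two pieces of the update law, then stitch them together via continuity across $T_\sigma$.

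First, for $t \in [t_\sigma^a, T_\sigma)$ the update in \eqref{eq:thetaUpdateSwitching} is inactive, so $\tilde\theta(t) \equiv \tilde\theta(t_\sigma^a)$ is constant and trivially bounded on this subinterval by its initial value. By continuity at $T_\sigma$, we have $\tilde\theta(T_\sigma) = \tilde\theta(t_\sigma^a)$, so it suffices to establish uniform ultimate boundedness on $[T_\sigma, t_\sigma^u)$. On that interval I would introduce the candidate
\begin{equation}
V_{\tilde\theta}(\tilde\theta) \coloneqq \tfrac{1}{2}\tilde\theta^{\top}\Gamma^{-1}\tilde\theta,
\end{equation}
which satisfies $\tfrac{1}{2\lambda_{\max}(\Gamma)}\|\tilde\theta\|^2 \leq V_{\tilde\theta}(\tilde\theta) \leq \tfrac{1}{2\lambda_{\min}(\Gamma)}\|\tilde\theta\|^2$.

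Differentiating along \eqref{eq:parameterErrorDynamics2} gives
\begin{equation}
\dot{V}_{\tilde\theta} = -k_\theta\,\tilde\theta^{\top}\mathcal{Y}_\sigma(t)\tilde\theta - k_\theta\,\tilde\theta^{\top}\Xi_\sigma(t).
\end{equation}
I would then apply Assumption~\ref{assump:fe} to bound $\tilde\theta^{\top}\mathcal{Y}_\sigma\tilde\theta \geq \lambda_y\|\tilde\theta\|^2$, and use $\|\Xi_\sigma(t)\| \leq c\bar{d}$ (which follows from $\Xi_{f\sigma} = \mathcal{O}(\bar d)$ together with the filter/CL construction) along with Young's inequality $k_\theta\tilde\theta^{\top}\Xi_\sigma \leq \tfrac{k_\theta\lambda_y}{2}\|\tilde\theta\|^2 + \tfrac{k_\theta c^2\bar d^2}{2\lambda_y}$. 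Collecting terms yields a bound of the form
\begin{equation}
\dot{V}_{\tilde\theta} \leq -\tfrac{k_\theta\lambda_y}{2}\|\tilde\theta\|^2 + \tfrac{k_\theta c^2 \bar d^2}{2\lambda_y} \leq -\kappa_1 V_{\tilde\theta} + \kappa_2,
\end{equation}
with $\kappa_1 \coloneqq k_\theta\lambda_y\lambda_{\min}(\Gamma)$ and $\kappa_2 \coloneqq \tfrac{k_\theta c^2 \bar d^2}{2\lambda_y}$. Applying the Comparison Lemma produces the explicit bound $V_{\tilde\theta}(t)\leq e^{-\kappa_1(t-T_\sigma)}V_{\tilde\theta}(T_\sigma)+\tfrac{\kappa_2}{\kappa_1}(1-e^{-\kappa_1(t-T_\sigma)})$, from which uniform ultimate boundedness with ultimate bound proportional to $\bar d / \lambda_y$ follows. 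Since $T_\sigma < t_\sigma^u$ and the bound is independent of $\sigma$ and of initial conditions (taking supremum over $\sigma\in\mathcal{N}^o$ gives uniformity), this establishes the claim globally and uniformly.

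The main obstacle I anticipate is not the Lyapunov manipulation itself, which is routine, but rather justifying the uniform bound $\|\Xi_\sigma(t)\|\leq c\bar d$ across all MRE variants listed in Section IV. The CL-based $\Xi_\sigma$ in \eqref{eq:concurrent_Xi} is a sum of normalized inner products of $\mathcal{Y}_{f\sigma}^{\top}$ with the filtered disturbance $\Xi_{f\sigma}$; bounding it requires showing that the normalization $1+\|\mathcal{Y}_{f\sigma}\|^2$ absorbs the growth of the numerator so that the sum is of order $\bar d$ uniformly in $N$ and in the sample times $t_i$. I would handle this by using Cauchy–Schwarz on each summand to get $\|\mathcal{Y}_{f\sigma}^{\top}\Xi_{f\sigma}\|/(1+\|\mathcal{Y}_{f\sigma}\|^2) \leq \tfrac{1}{2}\|\Xi_{f\sigma}\|$, then invoking the filter boundedness to conclude $\|\Xi_{f\sigma}\|\leq c'\bar d$. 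Once this estimate is in place, the Lyapunov argument above yields UUB on every $[T_\sigma, t_\sigma^u)$, and continuity of $\tilde\theta$ at $T_\sigma$ extends the conclusion to the full GPS-available interval $[t_\sigma^a,t_\sigma^u)$.
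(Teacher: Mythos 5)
Your proposal is correct and follows essentially the same route as the paper: the same Lyapunov function $\tfrac{1}{2}\tilde\theta^{\top}\Gamma^{-1}\tilde\theta$, the lower bound $\lambda_y$ from Assumption~\ref{assump:fe}, the $\mathcal{O}(\bar d)$ bound on $\Xi_\sigma$, and Young's inequality to reach a UUB conclusion (the paper invokes \cite[Theorem~4.18]{SCC.Khalil2002} where you use the comparison lemma, and it leaves the $t<T_\sigma$ case and the bound $\|\Xi_\sigma\|\le k_\xi\bar d$ implicit where you spell them out). The only minor caveat is that your Cauchy--Schwarz estimate for the CL sum yields a constant proportional to the (fixed) stack size $N$ rather than one uniform in $N$, which is harmless here.
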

\begin{proof}
Consider the candidate Lyapunov function $V_{\sigma}: \mathbb{R}^{p} \to \mathbb{R}_{\geq 0}$ defined as 
\begin{equation}
V_\sigma(\tilde{\theta}) \coloneqq \frac{1}{2}\tilde{\theta}^\top\Gamma^{-1}\tilde{\theta},
\end{equation}
which satisfies the inequality $\underline{\Gamma} \|\theta\|^{2} \leq V_\sigma(\tilde{\theta}) \leq \overline{\Gamma}  \|\theta\|^{2}$ for $\sigma \in \mathcal{N}^{o}$, where the bounds $\underline{\Gamma} \coloneqq \lambda_{\min}\left(\Gamma^{-1}\right)$ and $\overline{\Gamma} \coloneqq \lambda_{\max}\left(\Gamma^{-1}\right)$ are positive constants.
Taking the Lie derivative of $V_{\sigma}$ along the flow of the parameter error dynamics \eqref{eq:parameterErrorDynamics2} yields
 \begin{equation}
\dot{V}_\sigma(t,\tilde{\theta}) = \begin{cases}- k_{\theta}\tilde{\theta}^{\top}\mathcal{Y}_{\sigma}\tilde{\theta} - k_{\theta}\tilde{\theta}^{\top}\Xi_{\sigma}, & t \geq T_{\sigma}, \\
0, & t < T_{\sigma}. 
\end{cases}
\end{equation}
Using Assumption~\ref{assump:bounded_disturbance} and Assumption~\ref{assump:fe}, the Lie derivative of the candidate Lyapunov function is bounded by
 \begin{equation}
\dot{V}_\sigma(t,\tilde{\theta}) \leq - k_{\theta}\lambda_{y}\|\tilde{\theta}\|^{2} +k_{\theta}k_{\xi}\overline{d}\|\tilde{\theta}\|,
\end{equation}
where $k_{\xi} \in \mathbb{R}_{>0}$ is a bound on $\Xi(t)$ such that $\|\Xi(t)\| \leq k_{\xi}\overline{d}$, $\forall t \in [t_\sigma^a, t_\sigma^u)$. Applying Young's inequality, and defining the constants $\rho \coloneqq \frac{k_{\theta}\lambda_{y}}{4}$ and $\varpi \coloneqq \frac{k_{\theta}k_{\xi}^{2}\overline{d}^{2}}{2\lambda_{y}}$, the derivative of $V_{\sigma}$ can be bounded as
\begin{equation}
    \dot{V}_\sigma(t,\tilde{\theta}) \leq -\rho\left(\|\tilde{\theta}\|^{2} - \frac{\varpi}{\rho}\right).
\end{equation}
Hence, the Lie derivative of the candidate Lyapunov function satisfies $\dot{V}_\sigma(\tilde{\theta}) \leq -\rho\|\tilde{\theta}\|^{2}$ for all $\|\tilde{\theta}\| > \sqrt{\frac{\varpi}{\rho}} > 0$. Hence, \cite[Theorem~4.18]{SCC.Khalil2002} can be invoked to conclude that the trajectories of \eqref{eq:parameterErrorDynamics2} are globally uniformly ultimately bounded. In particular, for every trajectory $\tilde{\theta}(t)$ with arbitrary initial condition, we have that $\limsup_{t \to \infty}\big\|\tilde{\theta}(t)\big\| \leq \sqrt{\frac{\overline{\Gamma}}{\underline{\Gamma}}\left(\frac{\varpi}{\rho}\right)}$, which completes the proof.
\end{proof}

During the interval \( [t_\sigma^a, t_\sigma^u) \) when GPS is available, the parameter estimation error trajectories satisfy
\begin{equation}\label{eq:theta_bar_norm}
\|\tilde{\theta}(t)\|
    \;\le\;
    \sqrt{\frac{\overline{\Gamma}}{\underline{\Gamma}}\left(
      \|\tilde{\theta}(t_{\sigma}^a)\|^2\,
      e^{-\rho t/\overline{\Gamma}}
      \;+\;
      \frac{\varpi}{\rho}\,
      \Bigl(1 - e^{-\rho t/\overline{\Gamma}}\Bigr)\right)
    },
\end{equation}
for all $t \in [t_{\sigma}^a,t_{\sigma}^u)$. However, in the interval \( [t_\sigma^u, t_{\sigma+1}^a) \) when the agent does not have access to GPS, the parameter estimates $\hat{\theta}$ are not updated, i.e., $ \dot{\hat{\theta}}(t)=0$, $\forall t\in[t_\sigma^u,t_{\sigma+1}^a).$
The parameter estimation error thus remains unchanged during the interval $[t_\sigma^u, t_{\sigma+1}^a)$, i.e.,
\begin{equation}\label{eq:theta_tilde_ineq}
\|\tilde{\theta}(t_{\sigma+1}^a)\|=\|\tilde{\theta}(t_\sigma^u)\|.
\end{equation}
Extending this analysis recursively across \( k \) intervals, the estimation error after \( k \) GPS-available intervals satisfies
\begin{equation}\label{eq:final_theta_bar}
\|\tilde{\theta}(t^u_{\sigma+k})\| \leq 
\sqrt{ 
\alpha^{k+1} \|\tilde{\theta}(t^a_\sigma)\|^2 \prod_{j=0}^k e^{-\frac{\rho}{\overline{\Gamma}} \Delta t_{\sigma+j}} + 
\gamma \cdot \Phi_k
},
\end{equation}
where $\Phi_k := \sum_{j=0}^k \alpha^{k-j} \prod_{\ell = j+1}^{k} e^{-\frac{\rho}{\overline{\Gamma}} \Delta t_{\sigma+\ell}} \left(1 \!\!-\!\! e^{-\frac{\rho}{\overline{\Gamma}} \Delta t_{\sigma+j}}\right)$, \( \alpha := \frac{\overline{\Gamma}}{\underline{\Gamma}} \), and \( \gamma := \frac{\varpi}{\rho} \).
From the inequality in \eqref{eq:final_theta_bar}, we can see that after some finite time \( t_r \), the parameter estimation error $\tilde{\theta}(t)$ remains confined to a ball of radius $\sqrt{\frac{\overline{\Gamma}}{\underline{\Gamma}}\left(\frac{\varpi}{\rho}\right)}$. Letting \( \bar{\mathcal{N}} \subset \mathcal{N}^o \) denote the subset of switching indices such that \( t_\sigma^u < t_r \), the parameter estimation errors satisfy
\begin{equation}\label{eq:recursive_product_subset}
\|\tilde{\theta}(t_{\sigma+k}^u)\| <\|\tilde{\theta}(t_\sigma^u)\|, \quad \forall \sigma \in \bar{\mathcal{N}}.
\end{equation} 
The subsequent section presents the state observer and controller design, which leverages estimates from the update law in \eqref{eq:thetaUpdateSwitching} to enable the agent to operate within the user-specified error bounds.

\section{Controller Design and Observer Update Laws}
In the absence of continuous state feedback, an observer is required to estimate the state of the agent. This section presents a switching observer along with a stabilizing controller that enables the agent to track a desired trajectory, regardless of GPS availability.
Motivated by the observer design for switched systems presented in \cite{muzaffar.Chen.Yuan.ea2019}, the state estimates $\hat{x} \in \mathbb{R}^{n}$ are computed using an observer of the form
\begin{equation}\label{eq:state observer}
\dot{\hat{x}}_{\sigma} = 
\begin{cases}
    f(t, \hat{x}) + Y(t,\hat{x})\hat{\theta} + u(t) + v_{r,\sigma}, & \forall t \in \left[t_\sigma^a, t_{\sigma}^u\right), \\
    f(t, \hat{x}) + Y(t,\hat{x})\hat{\theta} + u(t), & \forall t \in [ t_\sigma^u, t_{\sigma+1}^a ),
\end{cases}
\end{equation}
with the sliding mode term $v_{r,\sigma}$ designed as  
\begin{equation}
v_r,\sigma = k_1 e_1 + (\bar{d} + \bar{Y} \tilde{\theta}_{\sigma})\, \text{sgn}(e_1),
\end{equation}  
where \( k_1 \in \mathbb{R}^{n \times n} \) is a positive definite control gain matrix, sgn$(\cdot)$ is the sign function, and $\tilde{\theta}_{\sigma}= \|\tilde{\theta}(t_\sigma^u)\|$ is the error bound at start of the GPS-denied interval derived in \eqref{eq:theta_bar_norm}. Note that since the observer is discontinuous, solutions of \eqref{eq:state observer} are interpreted in a generalized sense as Krasovskii solutions \cite{SCC.Shevitz.Paden1994}.

For notation brevity, the dependence of all functions on $t$ is omitted hereafter unless needed for clarity. Also, let $f_x \coloneqq f(t,x)$, $f_{\hat{x}} \coloneqq f(t,\hat{x})$, $Y_x \coloneqq Y(t,x)$, $Y_{\hat{x}} \coloneqq Y(t,\hat{x})$, and $d_x \coloneqq d(t,x)$. Using the state and parameter estimates from \eqref{eq:state observer} and \eqref{eq:thetaUpdateSwitching}, respectively, a stabilizing controller can be designed as
\begin{equation}
\label{eq:controller}
\scalebox{0.95}{$
u = 
\begin{cases}
    \dot{\bar{x}}_d - f_{\hat{x}} - k_2 e_2 - v_{r,\sigma} - Y_{\hat{x}}\hat{\theta}, & \forall t \in \left[t_\sigma^a, t_{\sigma}^u\right), \\
    \dot{\bar{x}}_d - f_{\hat{x}} - k_2 e_2- Y_{\hat{x}}\hat{\theta}, & \forall t \in [t_\sigma^u, t_{\sigma+1}^a ),
\end{cases}
$}
\end{equation}
where \( k_2 \in \mathbb{R}^{n \times n} \) is a positive definite tracking error gain matrix.  Substituting \eqref{eq:dynamics} and \eqref{eq:state observer} into the time derivative of \eqref{eq:e1} yields the state estimation error dynamics
{\medmuskip=2mu\thinmuskip=2mu\thickmuskip=2mu\begin{equation}\label{eq:e1_dot}\scalebox{0.93}{$
\dot{e}_1 =
\begin{cases}
    f_x + Y_x \theta+ d_x - f_{\hat{x}} - Y_{\hat{x}}\hat{\theta}- v_{r,\sigma} , & \forall t \in \left[t_\sigma^a, t_{\sigma}^u\right),\\
    f_x + Y_x\theta + d_x- f_{\hat{x}} - Y_{\hat{x}}\hat{\theta}, & \forall t \in \left[t_\sigma^u, t_{\sigma+1}^a\right).
\end{cases}$}
\end{equation}}
Similarly, substituting \eqref{eq:state observer} into the derivative of \eqref{eq:e2} yields the trajectory tracking error dynamics
\begin{equation}
\dot{e}_2 = 
\begin{cases}
f_{\hat{x}} + u + v_{r,\sigma} - \dot{\bar{x}}_d+ Y_{\hat{x}}\hat{\theta}, & \forall t \in \left[t_\sigma^a, t_{\sigma}^u\right) , \\
f_{\hat{x}} + u - \dot{\bar{x}}_d + Y_{\hat{x}}\hat{\theta}, & \forall t \in \left[t_\sigma^u, t_{\sigma+1}^a\right).
\end{cases}
\end{equation}
Using the controller designed in \eqref{eq:controller}, the error dynamics for \( e_2 \) for both intervals can be simplified to  
\begin{equation}\label{eq:e2_dot}
\dot{e}_2 = -k_2 e_2, \quad \forall t \in \left[t_\sigma^a, t_{\sigma+1}^a\right). 
\end{equation}  
The following section presents the main theoretical results of this paper and analyzes the stability of the switched error system under the controller in \eqref{eq:controller}.

\section{Stability Analysis for Switched Systems}\label{section:x_stability}
In this section, the stability of the developed switched observer is analyzed using the controller designed in \eqref{eq:controller}. Using the Lyapunov function introduced in \eqref{eq:lyapunov} along with the user-defined bounds \( V_u \) and \( V_l \) and the error dynamics in \eqref{eq:e1_dot} and \eqref{eq:e2_dot}, this section establishes the dwell-time conditions necessary to ensure the condition $V_l \leq V (e(t)) \leq V_u$ is satisfied $\forall t$. To facilitate the analysis, let $\mathcal{E} \subset \mathbb{R}^{2n}$ be a compact set defined as
\begin{equation}
    \mathcal{E} \coloneqq \left\{[e_{1}^\top, e_{2}^\top]^\top \in \mathbb{R}^{2n}\mid x, \hat{x} \in \mathcal{X}\right\},
\end{equation}
and let $\eta > 0$ be selected such that $B(0,\eta) \subset \mathcal{E}$, where $B(0,\eta)$ denotes the open ball of radius $\eta$ around the origin. It can be observed that whenever the errors $[e_{1}^\top, e_{2}^\top]^\top \in \mathcal{E}$, the Lipschitz bounds introduced in Assumption~\ref{assump:lipschitz} hold. The following theorem presents the main result of this paper.
\begin{theorem}\label{theorem:theorem2}
Given the error system in \eqref{eq:e1_dot} and \eqref{eq:e2_dot} with initial conditions satisfying $e(t_{0}^{a}) \in \mathcal{E}$, the bounds $V_l \leq V(e(t)) \leq V_u$ hold if Assumptions \ref{assump:lipschitz}--\ref{assump:fe} hold, the switching signal satisfies the dwell-time conditions
\begin{equation}\label{eq:dwell_time_a}
    \Delta t_\sigma^a \geq -\frac{1}{\underline{k}_a} 
    \ln \left( \frac{V_l}
    {V(e(t_\sigma^a))} \right),
\end{equation}
\begin{equation}\label{eq:dwell_time_u}
\Delta t_\sigma^u \leq \frac{1}{\underline{k}_u} \ln\left( \frac{V_u +\frac{(\bar{d} + \bar{Y}\tilde{\theta}_{\sigma})^2}{2 L_1 \underline{k}_u}}{V(\Delta t_{\sigma-1}^a) + \frac{(\bar{d} + \bar{Y}\tilde{\theta}_{\sigma})^2}{2 L_1 \underline{k}_u}} \right),
\end{equation}
where \(\underline{k}_a > 0\), \(\underline{k}_u > 0\) and \(L_1 > 0 \) are user-defined gains, and the upper bound $V_u$ is selected such that $V_u < \frac{1}{2}\eta^2$.
\end{theorem}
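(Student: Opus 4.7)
The plan is to bound $\dot V$ separately on each switching mode and invert the resulting scalar differential inequalities to recover the dwell-time formulas. Because $\dot e_2 = -k_2 e_2$ on both modes (see~\eqref{eq:e2_dot}), the contribution $e_2^\top \dot e_2 \leq -\lambda_{\min}(k_2)\|e_2\|^2$ is identical throughout, so all mode-dependent work concerns $e_1^\top \dot e_1$. I would begin from the regressor splitting $Y_x\theta - Y_{\hat x}\hat\theta = Y_x\tilde\theta + (Y_x - Y_{\hat x})\hat\theta$ and then use Assumptions~\ref{assump:lipschitz}--\ref{assump:bounded_disturbance}, together with the ultimate bound on $\hat\theta$ inherited from Theorem~\ref{theorem:theorem1}, to expose the sign of each term in $\dot V$.

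On a GPS-available interval $[t_\sigma^a, t_\sigma^u)$ the sliding term contributes $-e_1^\top(\bar d + \bar Y\tilde\theta_\sigma)\sign(e_1) = -(\bar d + \bar Y\tilde\theta_\sigma)\|e_1\|_1$, which combines with $-e_1^\top k_1 e_1$ to dominate the residual drift provided $k_1$ is sized to absorb the Lipschitz terms $L_f \|e_1\|^2$ and $L_y\|\hat\theta\|\,\|e_1\|^2$. This should produce a dissipation inequality $\dot V \leq -\underline{k}_a V$ for a computable $\underline{k}_a > 0$. A comparison argument then yields $V(e(t)) \leq V(e(t_\sigma^a))\,e^{-\underline{k}_a(t-t_\sigma^a)}$, and enforcing $V(e(t_\sigma^u)) \leq V_l$ rearranges to \eqref{eq:dwell_time_a}. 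Since $v_{r,\sigma}$ is discontinuous, this step must be carried out with the Krasovskii/Filippov generalized derivative already invoked when the observer is defined.

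On a GPS-denied interval $[t_\sigma^u, t_{\sigma+1}^a)$ the sliding term is off and $\hat\theta$ is frozen, so $\|\tilde\theta(t)\| = \tilde\theta_\sigma$ by \eqref{eq:theta_tilde_ineq}. I would absorb the uncompensated drift $\|Y_x\tilde\theta + d_x\| \leq \bar Y\tilde\theta_\sigma + \bar d$ via Young's inequality with parameter $L_1 > 0$, giving $(\bar d + \bar Y\tilde\theta_\sigma)\|e_1\| \leq \tfrac{L_1}{2}\|e_1\|^2 + \tfrac{(\bar d + \bar Y\tilde\theta_\sigma)^2}{2L_1}$. Folding the remaining Lipschitz pieces into the quadratic coefficient yields $\dot V \leq \underline{k}_u V + \tfrac{(\bar d + \bar Y\tilde\theta_\sigma)^2}{2 L_1}$; the comparison lemma applied to the shifted candidate $W \coloneqq V + \tfrac{(\bar d+\bar Y\tilde\theta_\sigma)^2}{2 L_1 \underline{k}_u}$, together with the requirement $V(e(t_{\sigma+1}^a)) \leq V_u$, then delivers \eqref{eq:dwell_time_u} (reading the denominator as $V$ at the start of the GPS-denied interval). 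An induction on $\sigma$, seeded by $V(e(t_0^a)) \leq V_u$, propagates $V_l \leq V(e(t)) \leq V_u$ across the entire trajectory, and the hypothesis $V_u < \tfrac{1}{2}\eta^2$ keeps $[e_1^\top,\, e_2^\top]^\top \in \mathcal{E}$ so the Lipschitz bounds in Assumption~\ref{assump:lipschitz} remain applicable throughout.

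The step I expect to be the main obstacle is the GPS-available bound: the sliding gain is constructed with the end-of-interval value $\tilde\theta_\sigma$, yet the envelope~\eqref{eq:theta_bar_norm} shows that $\|\tilde\theta(t)\|$ on the interior of the same interval can exceed $\tilde\theta_\sigma$. Closing this gap requires either choosing $k_1$ large enough to dominate the residual $\bar Y\bigl(\|\tilde\theta(t)\| - \tilde\theta_\sigma\bigr)\|e_1\|$ uniformly using~\eqref{eq:theta_bar_norm}, or replacing $\tilde\theta_\sigma$ in the sliding-mode gain by a uniform upper bound over $[t_\sigma^a, t_\sigma^u)$ so that $v_{r,\sigma}$ genuinely dominates. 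A secondary technicality is tracking the constants $\underline{k}_a$, $\underline{k}_u$, $L_1$ through the bounds so that the user-defined gain interpretation in the theorem statement is consistent with the Lipschitz and regressor bounds inherited from Assumption~\ref{assump:lipschitz}.
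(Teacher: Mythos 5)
Your proposal is correct and follows essentially the same route as the paper: a mode-wise bound on the generalized derivative of $V$, the comparison lemma on each interval, inversion of the resulting exponential envelopes to obtain \eqref{eq:dwell_time_a} and \eqref{eq:dwell_time_u}, and an induction over $\sigma$ with $V_u < \frac{1}{2}\eta^2$ keeping $e(t)$ in $\mathcal{E}$ so the Lipschitz bounds apply. The one obstacle you flag---that the sliding gain uses the end-of-interval value $\tilde{\theta}_\sigma = \|\tilde{\theta}(t_\sigma^u)\|$ while $\|\tilde{\theta}(t)\|$ may exceed it earlier in $[t_\sigma^a, t_\sigma^u)$---is a genuine wrinkle that the paper does not resolve either, since it simply asserts $\|Y_x\tilde{\theta}\| \leq \bar{Y}\tilde{\theta}_\sigma$ on the whole GPS-available interval.
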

\begin{proof} The generalized derivative $\dot{\tilde V}$ (see \cite[Equation 13]{SCC.Shevitz.Paden1994}) of the candidate Lyapunov function in \eqref{eq:lyapunov} along the flow of \eqref{eq:e1_dot} and \eqref{eq:e2_dot} is given by
\begin{equation}
\scalebox{0.95}{$\dot{\tilde V}(t,e) =
\begin{cases}
\begin{aligned}
    & e_1^\top \mathcal{K}\left( f_x - f_{\hat{x}} - v_{r,\sigma} + d_x + Y_x\theta - Y_{\hat{x}}\hat{\theta} \right) \\
    & \quad \quad + e_2^\top (-k_2 e_2), \quad \forall t \in \left[t_\sigma^a, t_{\sigma}^u\right), \\
\end{aligned} \\
\begin{aligned}
    & e_1^\top \left( f_x - f_{\hat{x}} + d_x+ Y_x\theta - Y_{\hat{x}}\hat{\theta} \right) \\
    & \quad \quad + e_2^\top (-k_2 e_2), \quad \forall t \in \left[t_\sigma^u, t_{\sigma+1}^a\right),
\end{aligned}
\end{cases}$}
\label{eq:lyuanoov_first_deri}
\end{equation}
where $\mathcal{K}$ denotes Filippov's differential inclusion \cite[Definition 2.1]{SCC.Shevitz.Paden1994}. 
\subsubsection{Analysis in the GPS-available interval}
Consider the interval $t \in \left[t_\sigma^a, t_{\sigma}^u\right)$, such that $e(t_\sigma^a) \in \mathcal{E}$. Since $e_1$ is measurable, the term \( v_{r,\sigma} \) can be implemented. Substituting \( v_{r,\sigma} \) into the first case of \eqref{eq:lyuanoov_first_deri} yields
\begin{multline}
\dot{\tilde V}(t,e) \subset  e_1^\top \Big( f_x + Y_x\theta + d_x - f_{\hat{x}} - Y_{\hat{x}}\hat{\theta} - k_1 e_1  \Big) \\
                - e_1^\top (\bar{d} + \bar{Y}\tilde{\theta}_{\sigma}) \, \text{SGN}(e_1) - e_2^\top k_2 e_2,
\end{multline}
where for a given $x$, the function $\text{SGN}(x) = 1$ if $x>1$,  $\text{SGN}(x) = -1$ if $x<-1$, and  $\text{SGN}(x) = [-1,1]$ if $x=0$. Since $f$ is Lipschitz continuous uniformly in $t$, \( \| d (t,x) \| \leq \bar{d} \), and $\|Y_x\tilde{\theta}\| \leq \bar{Y}\tilde{\theta}_{\sigma}$, for all $t \in \left[t_\sigma^a, t_{\sigma}^u\right)$ and $x \in \mathcal{X}$. Thus, $\dot{\tilde V}$ can be bounded on $\left[t_\sigma^a, t_{\sigma}^u\right) \times \mathcal{E}$ as
\begin{equation}\label{eq:V_diff_eqn_a}
\dot{\tilde V}(t,e) \leq (L_f - \lambda_{\min}(k_1)) \| e_1 \|^2  - \lambda_{\min}(k_2) \| e_2 \|^2,
\end{equation}
where \( \lambda_{\min}(k_1) \) and \( \lambda_{\min}(k_2) \) denote the minimum eigenvalues of \( k_1 \) and \( k_2 \), respectively, and for a set $A\subset \mathbb{R}$ and a scalar $b\in \mathbb{R}$, the notation $A\leq b$ is used to imply that for all $a\in A$, $a < b$.

Letting $\underline{k}_1 = -L_f +\lambda_{\min}(k_1)$ and  $\underline{k}_2 = -\lambda_{\min}(k_2)$, we can simplify \eqref{eq:V_diff_eqn_a} as
\begin{equation}
\dot{\tilde V}(t,e) \leq -\underline{k}_1 \| e_1 \|^2 - \underline{k}_2 \| e_2 \|^2,
\end{equation}
for all $(t,e) \in \left[t_\sigma^a, t_{\sigma}^u\right) \times \mathcal{E}$. Letting $\underline{k}_a = \min \{\frac{\underline{k}_1}{2},\frac{\underline{k}_2}{2} \}$, the generalized derivative of the Lyapunov function $V$ is bounded as 
\begin{equation}\label{eq:Vdot_final}
    \dot{\tilde V}(t,e) \leq -\underline{k}_a \ V(e), \quad \forall (t,e) \in \left[t_\sigma^a, t_{\sigma}^u\right) \times \mathcal{E}.
\end{equation}
Invoking, for example, \cite[Theorem~7.2]{SCC.Kamalapurkar.Dixon.ea2020}, it can be concluded that the error system in \eqref{eq:e1_dot} and \eqref{eq:e2_dot} is strongly locally asymptotically stable in $\Delta t_{\sigma}^a$.

Using the chain rule for generalized derivatives \cite[Proposition 4.2]{SCC.Kamalapurkar.Dixon.ea2020} and the comparison lemma \cite[Lemma 3.4]{SCC.Khalil1996}, the candidate Lyapunov function satisfies
\begin{equation}\label{eq:Vexpbound}
V(e(t)) \leq V(e(t_\sigma^a))e^{-\underline{k}_a (t-t_\sigma^a)}
, \quad \forall t \in \left[t_\sigma^a, t_{\sigma}^u\right).
\end{equation}
Setting \( t = t_\sigma^u \) and imposing the bound \( V(e(t_\sigma^u)) \geq V_l \), we obtain
\begin{equation}
    V_l \leq V(e(t_\sigma^u)) \leq V(e(t_\sigma^a)) e^{-\underline{k}_a (t_\sigma^u - t_\sigma^a)}.
\end{equation}
Solving for the dwell-time \( \Delta t_\sigma^a = t_\sigma^u - t_\sigma^a \), we obtain the dwell-time condition in \eqref{eq:dwell_time_a}. 
Thus, the lower bound $V_l \leq V(t_\sigma^u)$ holds provided the dwell-time condition in \eqref{eq:dwell_time_a} is satisfied.

\subsubsection{Analysis during the GPS-Denied Interval}
For \( t \in \left[t_\sigma^u, t_{\sigma+1}^a\right)\), the sliding mode term \( v_{r,\sigma} \) is not measurable. Thus, using the second case of equation \eqref{eq:lyuanoov_first_deri}, we can express the derivative of $V$ as
\begin{multline}
\dot{V}(t,e) = 
     e_1^\top \left( f_x - f_{\hat{x}} + d_x+ Y_x\theta - Y_{\hat{x}}\hat{\theta} \right) \\
     + e_2^\top (-k_2 e_2), \quad \forall t \in \left[t_\sigma^u, t_{\sigma+1}^a\right).
\end{multline}
Note that using \eqref{eq:Vexpbound} and the fact that $e(t_\sigma^a)\in \mathcal{E}$, it can be concluded that $e(t_\sigma^u) \in \overset{\circ}{\mathcal{E}}$, where $\overset{\circ}{\mathcal{E}}$ denotes the interior of $\mathcal{E}$. Let $t_\sigma^\prime$ be a time instance such that $ t_\sigma^\prime = \inf_{t > t_\sigma^u} \{e(t)\notin \mathcal{E}\}$. Since $e(t_\sigma^u) \in \overset{\circ}{\mathcal{E}}$, the interval $[t_\sigma^u,t_\sigma^\prime]$ is of nonzero length. Since $e(t)\in \mathcal{E}$ on the interval $[t_\sigma^u,t_\sigma^\prime]$, using the local Lipschitz continuity of \( f \) and \( Y\) and using the bound on \( d\), the time-derivative of the candidate Lyapunov function can be bounded on $[t_\sigma^u,t_\sigma^\prime]$ as
\begin{equation}
\dot{V}(t,e) \leq L_1 \|e_1\|^2 + (\bar{d} + \bar{Y}\tilde{\theta}_{\sigma}) \|e_1\| - \underline{k}_2 \|e_2\|^2,
\end{equation}
where $L_1 \coloneqq L_f + L_Y \tilde{\theta}_{\sigma} $ and \( \underline{k}_2 = \lambda_{\min}(k_2) \).
Using completion of the squares
\begin{equation}
\dot{V}(t,e(t)) \leq \underline{k}_u V (e(t)) + \frac{(\bar{d} + \bar{Y}\tilde{\theta}_{\sigma})^2}{2 L_1},
\end{equation}
where $\underline{k}_u \coloneqq \min\{\frac{3L_1}{2}, \underline{k}_2\}$. 
Using the comparison lemma \cite[Lemma 3.4]{SCC.Khalil2002}, any solution to this differential inequality starting at time \( t_\sigma^u \) satisfies
\begin{equation}
V(e(t)) \leq V(e(t_\sigma^u)) e^{\underline{k}_u (t - t_\sigma^u)} + \frac{(\bar{d} + \bar{Y}\tilde{\theta}_{\sigma})^2}{2 L_1 \underline{k}_u} (e^{\underline{k}_u(t - t_\sigma^u)} - 1),
\end{equation}
$\forall t \in [t_\sigma^u,t_\sigma^\prime]$. 
Under the dwell-time condition
\begin{equation}
\Delta t_\sigma^u \leq \frac{1}{\underline{k}_u} \ln\left( \frac{V_u +\frac{(\bar{d} + \bar{Y}\tilde{\theta}_{\sigma})^2}{2 L_1 \underline{k}_u}}{V(e(t_\sigma^u)) + \frac{(\bar{d} + \bar{Y}\tilde{\theta}_{\sigma})^2}{2 L_1 \underline{k}_u}} \right),
\end{equation}
we obtain the bound 
\begin{equation}
V_u \geq V(e(t_\sigma^u)) e^{\underline{k} \Delta t_\sigma^u} + \frac{(\bar{d} + \bar{Y}\tilde{\theta}_{\sigma})^2}{2 L_1 \underline{k}} (e^{\underline{k} \Delta t_\sigma^u} - 1).
\end{equation}
Since $V_u$ is selected to ensure that $V(e) \leq V_u \implies e\in\mathcal{E}$, we conclude that $t_\sigma^\prime \geq t^a_{\sigma+1}$, and as a result, for all $t\in [t_\sigma^u,t_{\sigma+1}^a]$, \( V(e(t)) \leq V_u \) and $e(t) \in \mathcal{E}$. In particular, \( V(e(t^a_{\sigma+1})) \leq V_u \) and $e(t^a_{\sigma+1}) \in \mathcal{E}$. 

Since $e(t_0^a) \in \mathcal{E}$, an inductive argument can be used to conclude that $\forall \sigma \in \mathcal{N}^{o}$, $e(t^a_{\sigma}) \in \mathcal{E}$ and for all $t\geq t_0^a$, $V_l \leq V(e(t))\leq V_u$. 
\end{proof}

From \eqref{eq:recursive_product_subset}, we conclude that the term \( \tilde{\theta}_{\sigma} \) decreases over successive feedback-denied intervals, i.e.,
\begin{equation}\label{eq:dec_theta}
\tilde{\theta}(t_1^u) > \tilde{\theta}(t_2^u) > \cdots > \tilde{\theta}(t_{\bar{\mathcal{|N|}}}^u),
\end{equation}
where \( \bar{\mathcal{|N|}} \) denotes the switching instance when $\tilde{\theta}$ reaches the UUB bound derived in Theorem 1. Using \eqref{eq:dec_theta} and the dwell-time conditions in \eqref{eq:dwell_time_u}, the following inequality can be satisfied.
\begin{equation}
    \Delta t_1^u < \Delta t_2^u < \dots < \Delta t_{|\mathcal{\bar{N}}|}^u.
\end{equation}
That is, the time spent in the GPS-denied region can be increased with improving estimates of $\theta$, up to a limit. This completes the theoretical analysis of the switching observer and control framework under intermittent GPS availability. In the following section, simulation results are presented to validate the developed approach.

\section{Simulation}
To demonstrate the performance of the developed method, consider the following nonlinear second-order dynamical system 
\begin{equation}
    \dot{x} = \begin{bmatrix} 0 & 0 \\ -x_1 & -x_1^3 \end{bmatrix}\theta + \begin{bmatrix} x_2 \\ x_1 \end{bmatrix} +u,
\end{equation}
where \( x : [0,\infty) \mapsto \mathbb{R}^2 \), \( u : [0,\infty) \mapsto \mathbb{R}^2 \), and the unknown parameters are selected as
\begin{equation}
    \theta = \begin{bmatrix} 1 & 0.5 \end{bmatrix}^\top.
\end{equation}
The desired trajectory is selected as
\begin{equation}
    x_d(t) = \begin{bmatrix} \sin(2t) \\ 2\cos(2t) \end{bmatrix}, \quad     \dot{x}_d(t) = \begin{bmatrix} 2\cos(2t) \\ -4\sin(2t) \end{bmatrix}.
\end{equation}
The system is subjected to a random additive disturbance as unmodeled dynamics, uniformly distributed in the interval \( [-1.5,\, 1.5] \), with a known upper bound of \( \bar{d} = 1.5 \). At $t=0$, GPS signal is available and the initial conditions are
\[
x(0) = \begin{bmatrix} -1 \\ 1 \end{bmatrix}, \quad 
\hat{x}(0) = \begin{bmatrix} 0 \\ 0 \end{bmatrix}, \quad 
\hat{\theta}(0) = \begin{bmatrix} 0 \\ 0 \end{bmatrix}.
\]
The parameter estimation law employs a CL strategy that maintains a history stack of \( N = 20 \) data points during 
$\Delta t_{\sigma}^a$. The adaptation gain is chosen as \( \Gamma = 4 I_2\) and $k_{\theta} = 5$. The integral terms in \eqref{eq:filteredSignal1}--\eqref{eq:filteredSignal3}, are accumulated over a fixed time window of \( \Delta t = 0.25\,\text{s} \), and a new data point is added to the history stack when the minimum eigenvalue of the associated information matrix improves by more than a threshold \( \underline{\lambda} = 0.04 \). The delay differential equations (DDEs) governing the CL dynamics are integrated using the MATLAB \texttt{dde23} solver.

At the culmination of $[t_{0}^a,t_{0}^u)$, a switching condition is triggered as state feedback becomes unavailable and the parameter updates are provided to the agent along with the computed $\Delta t_0^u$. During \( [t_{0}^u,t_{1}^a) \), the parameter estimates are frozen (\( \dot{\hat{\theta}} = 0 \)) and the system dynamics are propagated using the MATLAB \texttt{ODE45} solver based on the current state estimates and last available parameter estimates. The simulation then proceeds to the next interval of $[t_{1}^a,t_{1}^u)$, and the process is repeated.

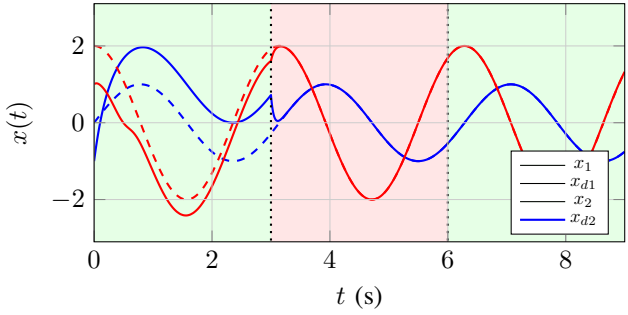
\begin{figure}
    \centering
    \begin{tikzpicture}
\begin{axis}[
    xlabel={$t$ (s)},
    ylabel={$x(t)$},
    legend pos=south east,
    legend style={nodes={scale=0.7, transform shape}},
    width=\linewidth,
    height=0.55\linewidth,
    grid=both,
        xmin = 0,
        xmax = 9,
    ymin = -3.1,
    ymax = 3.1,
    axis on top=true,            
    grid=both,
    grid style={line width=0.3pt, draw=gray!40}, 
    clip=true                    
]
\addplot [
    draw=none,
    fill=green!10,   
    opacity=1
] coordinates {(0, -10) (3, -10) (3, 10) (0, 10)};

\addplot [
    draw=none,
    fill=red!10,     
    opacity=1
] coordinates {(3, -10) (6, -10) (6, 10) (3, 10)};

\addplot [
    draw=none,
    fill=green!10,   
    opacity=1
] coordinates {(6, -10) (9, -10) (9, 10) (6, 10)};
\addplot[blue, thick] table [x index=0, y index=1] {data/xDataSim1.dat};
\addlegendentry{$x_1$}

\addplot[blue, dashed, thick] table [x index=0, y index=1] {data/xDesiredSim1.dat};
\addlegendentry{$x_{d1}$}

\addplot[red, thick] table [x index=0, y index=2] {data/xDataSim1.dat};
\addlegendentry{$x_2$}

\addplot[red, dashed, thick] table [x index=0, y index=2] {data/xDesiredSim1.dat};
\addlegendentry{$x_{d2}$}

\addplot[black, dotted, thick] coordinates {(3, -10) (3, 10)};
\addplot[black, dotted, thick] coordinates {(6, -10) (6, 10)};

\end{axis}
\end{tikzpicture}
    \caption{Comparison of the actual state trajectory \( x(t) \) with the desired trajectory \( x_d(t) \). Time intervals with GPS availability are highlighted in green, while GPS-denied intervals are shown in red. The degraded performance of the tracking controller is due to poor estimates of $\hat{\theta}$, however after the CL update the tracking performance is improved in first denied interval. }
    \label{fig:x_state_plot}
\end{figure}

\begin{figure}
    \centering
    \begin{tikzpicture}
\begin{axis}[
    xlabel={$t$ (s)},
    ylabel={$\hat{x}(t)$},
    legend pos=south east,
    legend style={nodes={scale=0.7, transform shape}},
    width=\linewidth,
    height=0.55\linewidth,
    grid=both,
    xmin = 0,
    xmax = 9,
    ymin = -3.3,
    ymax = 3.1,
    axis on top=true,            
    grid=both,
    grid style={line width=0.3pt, draw=gray!40}, 
    clip=true                    
]

\addplot [
    draw=none,
    fill=green!10,   
    opacity=1
] coordinates {(0, -10) (3, -10) (3, 10) (0, 10)};

\addplot [
    draw=none,
    fill=red!10,     
    opacity=1
] coordinates {(3, -10) (6, -10) (6, 10) (3, 10)};

\addplot [
    draw=none,
    fill=green!10,   
    opacity=1
] coordinates {(6, -10) (9, -10) (9, 10) (6, 10)};
\addplot[blue, thick] table [x index=0, y index=1] {data/xHatDataSim1.dat};
\addlegendentry{$\hat{x}_1$}

\addplot[red, thick] table [x index=0, y index=2] {data/xHatDataSim1.dat};
\addlegendentry{$\hat{x}_2$}

\addplot[blue, dashed, thick] table [x index=0, y index=1] {data/xDesiredSim1.dat};
\addlegendentry{$x_{d1}$}

\addplot[red, dashed, thick] table [x index=0, y index=2] {data/xDesiredSim1.dat};
\addlegendentry{$x_{d2}$}

\addplot[black, dotted, thick] coordinates {(3, -10) (3, 10)};
\addplot[black, dotted, thick] coordinates {(6, -10) (6, 10)};
\end{axis}
\end{tikzpicture}
    \caption{Comparison of the estimated state trajectory $\hat{x}(t)$ with the desired trajectory $x_d(t)$.}
    \label{fig:x_hat_plot}
\end{figure}
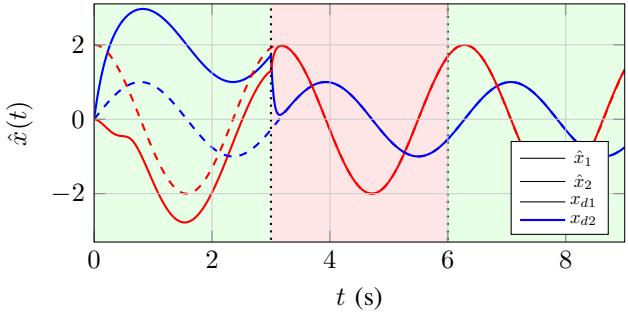

\begin{figure}
    \centering
    \begin{tikzpicture}
\begin{axis}[
    xlabel={$t$ (s)},
    ylabel={$\tilde{\theta}(t)$},
    legend pos=north east,
    legend style={nodes={scale=0.7, transform shape}},
    width=\linewidth,
    height=0.5\linewidth,
    grid=both,
        xmin = 0,
        xmax = 9,
        ymin = -0.3,
        ymax = 1.1,
    axis on top=true,            
    grid=both,
    grid style={line width=0.3pt, draw=gray!40}, 
    clip=true                    
]
\addplot [
    draw=none,
    fill=green!10,   
    opacity=1
] coordinates {(0, -10) (3, -10) (3, 10) (0, 10)};

\addplot [
    draw=none,
    fill=red!10,     
    opacity=1
] coordinates {(3, -10) (6, -10) (6, 10) (3, 10)};

\addplot [
    draw=none,
    fill=green!10,   
    opacity=1
] coordinates {(6, -10) (9, -10) (9, 10) (6, 10)};

\addplot[blue, thick] table [x index=0, y index=1] {data/thetaTildeSim1.dat};
\addlegendentry{$\tilde{\theta}_1$}

\addplot[red, thick] table [x index=0, y index=2] {data/thetaTildeSim1.dat};
\addlegendentry{$\tilde{\theta}_2$}

\addplot[black, dashed, domain=0:10] {0};

\addplot[black, dotted, thick] coordinates {(3, -10) (3, 10)};
\addplot[black, dotted, thick] coordinates {(6, -10) (6, 10)};

\end{axis}
\end{tikzpicture}
    \caption{Time evolution of the parameter estimation error $\tilde{\theta}(t)$. The error components $\tilde{\theta}_1$ and $\tilde{\theta}_2$ converge toward a neighborhood of the origin during GPS-available intervals and remain constant during GPS-denied intervals.}
    \label{fig:theta_tilde_plot}
\end{figure}
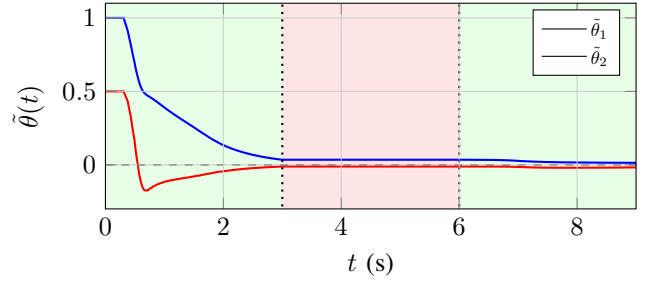

\begin{figure}
    \centering
    \begin{tikzpicture}
\begin{axis}[
    xlabel={$t$ (s)},
    ylabel={$e_1(t)$},
    legend pos=north east,
    legend style={nodes={scale=0.7, transform shape}},
    width=\linewidth,
    height=0.5\linewidth,
    xmin = 0,
    xmax = 9,
    ymin = -2.5,
    ymax = 2.5,
    axis on top=true,            
    grid=both,
    grid style={line width=0.3pt, draw=gray!40}, 
    clip=true                    
]

\addplot [
    draw=none,
    fill=green!10,   
    opacity=1
] coordinates {(0, -10) (3, -10) (3, 10) (0, 10)};

\addplot [
    draw=none,
    fill=red!10,     
    opacity=1
] coordinates {(3, -10) (6, -10) (6, 10) (3, 10)};

\addplot [
    draw=none,
    fill=green!10,   
    opacity=1
] coordinates {(6, -10) (9, -10) (9, 10) (6, 10)};

\addplot[blue, thick] table [x index=0, y index=1] {data/e1Sim1.dat};
\addlegendentry{$(e_1)_1$}

\addplot[red, thick] table [x index=0, y index=2] {data/e1Sim1.dat};
\addlegendentry{$(e_1)_2$}

\addplot[black, dashed, domain=0:10] {0};

\addplot[black, dotted, thick] coordinates {(3, -10) (3, 10)};
\addplot[black, dotted, thick] coordinates {(6, -10) (6, 10)};

\end{axis}
\end{tikzpicture}
    \caption{The components of state estimation error, $(e_1)_1$ and $(e_1)_2$ are shown over time.}
    \label{fig:e1_error_plot}
\end{figure}
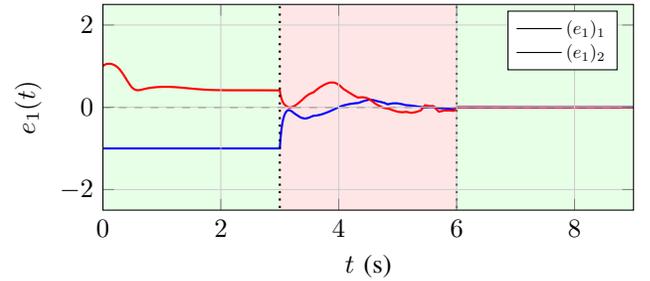

\begin{figure}
    \centering
    \begin{tikzpicture}
\begin{axis}[
    xlabel={$t$ (s)},
    ylabel={$e_2(t)$},
    legend pos=north east,
    legend style={nodes={scale=0.7, transform shape}},
    width=\linewidth,
    height=0.5\linewidth,
    grid=both,
        xmin = 0,
        xmax = 9,
        ymin = -2.5,
        ymax = 2.5,
    axis on top=true,            
    grid=both,
    grid style={line width=0.3pt, draw=gray!40}, 
    clip=true                    
]

\addplot [
    draw=none,
    fill=green!10,   
    opacity=1
] coordinates {(0, -10) (3, -10) (3, 10) (0, 10)};

\addplot [
    draw=none,
    fill=red!10,     
    opacity=1
] coordinates {(3, -10) (6, -10) (6, 10) (3, 10)};

\addplot [
    draw=none,
    fill=green!10,   
    opacity=1
] coordinates {(6, -10) (9, -10) (9, 10) (6, 10)};

\addplot[blue, thick] table [x index=0, y index=1] {data/e2Sim1.dat};
\addlegendentry{$(e_2)_1$}

\addplot[red, thick] table [x index=0, y index=2] {data/e2Sim1.dat};
\addlegendentry{$(e_2)_1$}

\addplot[black, dashed, domain=0:10] {0};

\addplot[black, dotted, thick] coordinates {(3, -10) (3, 10)};
\addplot[black, dotted, thick] coordinates {(6, -10) (6, 10)};

\end{axis}
\end{tikzpicture}
    \caption{The components of trajectory tracking error, $(e_2)_1$ and $(e_2)_2$ are shown over time.}
    \label{fig:e2_error_plot}
\end{figure}
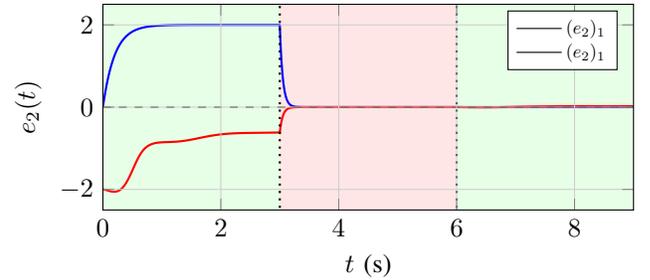

\begin{figure}
    \centering
    \begin{tikzpicture}
\begin{axis}[
    ybar,
    bar width=20pt,
    enlarge x limits=0.3,
    xlabel={$\sigma$},
    ylabel={$\Delta t^u$ (s)},
    ymin=2.5,
    ymax=3.7,
    xtick={1,2,3},
    xticklabels={$1$,$2$,$3$},
    grid=both,
    width=0.7\linewidth,
    height=0.5\linewidth,
    nodes near coords,
    nodes near coords align={vertical},
    legend style={at={(0.5,-0.15)}, anchor=north, legend columns=1},
    axis on top=true,            
    grid=both,
    grid style={line width=0.3pt, draw=gray!40}, 
    clip=true                    
]

\addplot[fill=blue] coordinates {(1, 3) (2, 3.2) (3, 3.3)};

\end{axis}
\end{tikzpicture}
    \caption{Variation of the dwell-time $\Delta t_{\sigma}^u$ across successive switching instances $\sigma$, illustrating the allowable time at start of each GPS-denied interval.}
    \label{fig:dwell_time_plot}
\end{figure}
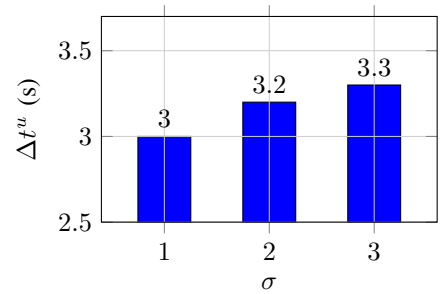

\section{Discussion}

Figure \ref{fig:x_state_plot} shows the trajectory of the actual system state \( x \) alongside the desired trajectory \( x_d \), illustrating successful convergence and tracking. The estimated state trajectory \( \hat{x}(t) \), shown in Figure \ref{fig:x_hat_plot}, closely follows the desired path despite switching between GPS-available and GPS-denied intervals. The evolution of the parameter estimation error \( \tilde{\theta}(t) \) is depicted in Figure \ref{fig:theta_tilde_plot}, indicating convergence of the estimates to actual parameters over time. Figures \ref{fig:e1_error_plot} and \ref{fig:e2_error_plot} present the state estimation error \( e_1(t) \) and the trajectory tracking error \( e_2(t) \), respectively, both of which remain bounded and diminish over time, thereby demonstrating the effectiveness of the integral concurrent learning mechanism under the switching feedback conditions. Figure \ref{fig:dwell_time_plot} shows that as the parameters are learned more accurately and the observer provides better state estimates, the time the agent can operate in the GPS-denied region increases. These refined parameter estimates allow for longer periods of autonomous navigation without GPS, significantly extending the agent's operational range without continuous state feedback.

\section{Conclusion}
In a switched system framework, the given error bounds impose the dwell-time constraints during operation in GPS-denied intervals. To address this constraint, this paper integrates a switched systems framework with memory regressor extension techniques. By leveraging state measurements in GPS-available intervals the unmodeled modeled parameters can be estimated. By compensating for unmodeled dynamics during operation in GPS-denied intervals, the method effectively controls the growth of state estimation errors and enables prolonged operation in GPS-denied environments.

Future work will consider more general dynamic models that extend beyond the static parametric structure. Additionally, we aim to generalize the developed framework to spatially complex environments where GPS availability is topology-dependent, necessitating the integration of trajectory planning techniques while relaxing dwell-time constraints.

\bibliographystyle{IEEETrans.bst}
\bibliography{scc,sccmaster,muzaffar}
\end{document}